\newtheorem{theorem}{Theorem}
\newtheorem{lemma}{Lemma}
\newtheorem{definition}{Definition}
\begin{document}


\title{\textbf{Generalized Degrees of Freedom of the Interference Channel with a Signal Cognitive Relay}}

\author{\normalsize{Anas Chaaban and Aydin Sezgin}\\
\normalsize{Emmy-Noether Research Group on Wireless Networks}\footnote{This work is supported by the German Research Foundation, Deutsche Forschungsgemeinschaft (DFG), Germany, under grant SE 1697/3.}\\
\normalsize{Institute of Telecommunications and Applied Information Theory}\\
\normalsize{Ulm University, 89081, Ulm, Germany}\\
\normalsize{Email: {anas.chaaban@uni-ulm.de, aydin.sezgin@uni-ulm.de}}}

\maketitle

\begin{abstract}
We study the interference channel with a signal cognitive relay. A signal cognitive relay knows the transmit signals (but not the messages) of the sources non-causally, and tries to help them communicating with their respective destinations. We derive upper bounds and provide achievable schemes for this channel. These upper and lower bounds are shown to be tight from generalized degrees of freedom point of view. As a result, a characterization of the generalized degrees of freedom of the interference channel with a signal cognitive relay is given.
\end{abstract}

\section{Introduction}

Due to its increasing practical importance, the interference relay channel has witnessed an increasing research activity recently. The study of the relay channel was pioneered by \cite{CovGam79}, where the capacity of the point to point relay channel was analyzed. Recently, the interference relay channel (IC-R) has been the topic of many papers. 


Another aspect in information theory that has attracted research attention is the idea of cognition. The concept of cognitive radio aims at increasing the spectral efficiency of a system. The communicating nodes observe their medium and adapt their strategy, in such a way that improves the performance while avoiding interference with other nodes using the same medium. This opens new fields for study such as cognitive radio channels, and interference channels with a cognitive relay (IC-CR).

The cognitive radio channel has a been studied in \cite{DevMitTar} for example. The capacity of a class of cognitive radio channels was given recently in \cite{WuVishAra}, where the 2 user interference channel with a degraded message set is considered. Namely, the authors give the capacity of this channel under weak interference assumptions.

The IC-CR has also had its share of research interest. For example, the 2 user IC-CR is studied in \cite{SahErk}, which considers achievable rates in the IC-CR. A distinction between a signal cognitive relay (that knows the signals of both sources non causally, but not the messages) and message cognitive relay (that knows the messages of both sources non causally) is made in \cite{SahErk2}, where the 2 user IC-CR with one sided interference is analyzed.

Achievable rates and outer bounds for the 2 user IC-CR are given in \cite{SriVishJafSha}. The authors provide an achievable scheme that combines both Costa's dirty paper coding \cite{Costa} and Han-Kobayashi \cite{HanKob} schemes. Moreover, in the same paper, the authors derive the degrees of freedom (DOF) of this channel.

Besides the IC-CR, lots of work has been done on the IC with a full or a half duplex relay, see e.g. \cite{SahErkSim} and references therein. A couple of important resutls for the IC-R deserve special attention. In \cite{CadJaf}, the effect of relaying and other factors on the DOF of wireless networks is analyzed. The authors show that full duplex relays can not increase the DOF of a $K$ user X network even with feedback, cooperation, and full-duplex nodes. On the other hand, in an interesting setup in \cite{GomCadJaf}, the benefit of using a half-duplex relay in a $K$ user IC is discussed, where it was shown to be helpful for interference alignment. 

In addition to studying the capacity and the DOF of a channel, one might study the generalized DOF (GDOF). The idea of characterizing the GDOF was first used in \cite{Tse}, where the capacity of the IC is characterized to within one bit. Later, the GDOF of other channels were studied, e.g. \cite{AkuLev,HuaCadJaf}. In this paper, we study the GDOF of the symmetric IC with a signal cognitive relay (IC-SCR). We provide upper bounds on the achievable rates, and achievable schemes, and show that they coincide from a GDOF perspective.

The rest of the paper is organized as follows: In section \ref{Model}, we introduce the IC-SCR. In section \ref{Summary}, we summarize the results of this paper by stating the GDOF of the IC-SCR. We give upper bounds for the IC-SCR in section \ref{UpperBounds} and give GDOF achieving schemes in section \ref{GDOF}. Finally, we conclude in section \ref{Conc}.

\section{System Model}
\label{Model}
We consider a symmetric IC-SCR as shown in Figure \ref{model}. The transmitters and the relay have power $P$. The direct, cross, and relay channels are denoted $h_d$, $h_c$, and $h_r$ respectively. The relay is signal cognitive in the sense that it knows the transmit signals of both transmitters non-causally, but is not aware of the codebooks. The transmit signals of the IC-SCR can be written as follows:
\begin{align*}
x_1^n&=X_1^n\\
x_2^n&=X_2^n\\
x_r^n&=\frac{b_1}{a_1}X_1^n+\frac{b_2}{a_2}X_2^n
\end{align*}
where $\frac{1}{n}\sum_{i=1}^n\mathbb{E}[X_{j,i}^2]=a_j^2P$, $a_{j}^2\leq 1$ for $j\in\{1,2\}$. Since the power of the relay signal is $(b_1^2+b_2^2)P$, then we must have $\sum_{j=1}^2b_{j}^2\leq1$ to satisfy the power constraint at the relay. Thus, the received signals are given by
\begin{align}
\label{InOut}
Y_1=\left(h_d+\frac{b_1}{a_1}h_r\right)X_1+\left(h_c+\frac{b_2}{a_2}h_r\right)X_2+Z_1\nonumber\\
Y_2=\left(h_d+\frac{b_2}{a_2}h_r\right)X_2+\left(h_c+\frac{b_1}{a_1}h_r\right)X_1+Z_2.
\end{align}
We consider zero mean unit variance complex noise at both receivers, i.e. $Z_j\sim\mathcal{CN}(0,1)$.

\begin{figure}[h]
  \centering{
      \input{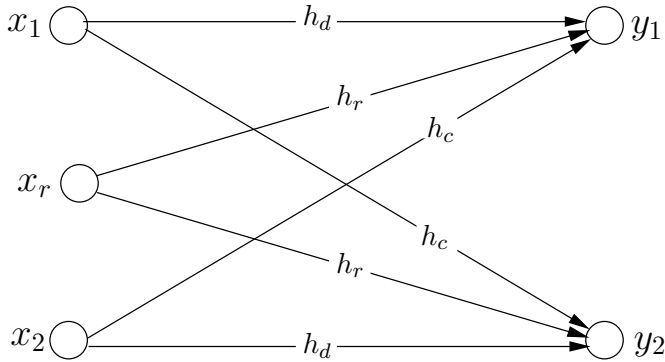}
    }
     \caption{The symmetric interference channel with a cognitive relay.}
     \label{model}
\end{figure}
There is no loss of generality in restricting the relay power to $P$, since any additional power at the relay can be modeled by larger $h_r$. Notice that in an IC-SCR, the received signals are similar to that of the IC, but with channel gains that are dependent on $a_j$ and $b_j$. This opens the possibility of artificially changing the channel parameters by properly choosing $a_j$ and $b_j$.

\section{Main Result}
\label{Summary}
In this section, we summarize the result of this paper. We give a characterization for the GDOF of the IC-SCR. For this purpose, we are going to need the following notations \cite{HuaCadJaf}:
\begin{align*}
\rho:=h_d^2P,\quad\alpha:=\lim_{\rho\to\infty}\frac{\log(h_c^2P)}{\log(\rho)},\quad\beta:=\lim_{\rho\to\infty}\frac{\log(h_r^2P)}{\log(\rho)},
\end{align*}
or equivalently
\begin{align*}
\lim_{\rho\to\infty}h_c^2P=\rho^\alpha,\quad \lim_{\rho\to\infty}h_r^2P=\rho^\beta.
\end{align*}
This definition varies slightly from the definitions in \cite{Tse}. As we will see, this definition will be more useful in our context.
\begin{definition}
We define the GDOF as 
\begin{align}
\label{def}
d(\alpha,\beta):=\lim_{\rho\to\infty}\frac{R_1+R_2}{2\log(\rho)}.
\end{align}
\end{definition}
Since we are considering the symmetric case, i.e. $d_1=d_2=d$, we divide by two in (\ref{def}) to obtain the GDOF per user.
Instead of $d(\alpha,\beta)$, we will use $d$ to denote the GDOF, which is clearly always a function of $\alpha$ and $\beta$. The GDOF of the IC-SCR is given in the following theorem.

\begin{theorem}
\label{MainTheorem}
The GDOF of the IC-SCR is given by:
\begin{align*}
d=\left\{\begin{array}{lr}
\min\left\{ \max\{1,\beta\},\max\left\{\beta,\frac{\alpha}{2}\right\},\max\left\{\alpha,\frac{\beta}{2}\right\} \right\},&\text{if } \alpha>1,\\
\\
\min\left\{\max\{1+\beta-\alpha,\alpha\},\max\left\{1-\frac{\alpha}{2},\beta,1+\beta-\alpha,1\right\}\right\}& \text{otherwise.}\end{array}\right.
\end{align*}
\end{theorem}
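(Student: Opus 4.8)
\emph{The plan} is to establish the theorem in the usual two halves---a converse showing $d$ cannot exceed the stated expression and a matching achievability---and the organizing idea is a reformulation. Since the relay transmits $x_r=\frac{b_1}{a_1}X_1+\frac{b_2}{a_2}X_2$, a \emph{deterministic linear} function of the independent inputs, the IC-SCR is, for any fixed choice of the parameters, an ordinary two-user Gaussian interference channel with effective gains $g_{11}=h_d+r_1$, $g_{21}=h_c+r_1$, $g_{22}=h_d+r_2$, $g_{12}=h_c+r_2$, where $r_j:=\frac{b_j}{a_j}h_r$ is the gain the relay injects into $X_j$'s path. The relay power constraint $b_1^2+b_2^2\le1$ becomes a joint budget $a_1^2r_1^2+a_2^2r_2^2\le h_r^2$, and the system is free to tune $(r_1,r_2,a_1,a_2)$. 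The essential tension is visible immediately: each $r_j$ adds to both the desired gain $g_{jj}$ and the interfering gain $g_{\bar j j}$, so the relay cannot boost a user's own signal without simultaneously amplifying the interference that user creates. Thus $d$ is the interference-channel GDOF optimized over all admissible tunings.

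\textbf{Converse.} For each fixed tuning I would collect several standard upper bounds $U_\ell$ on $R_1+R_2$: a single-user cut-set bound (let the relay devote all its power to one source, raising its effective SNR from $\rho$ to at most $\rho^{\beta}$, giving the $\max\{1,\beta\}$ prelog); genie-aided bounds that hand one receiver the other input $X_2^n$ so the residual is point-to-point; and strong-interference/MAC bounds in which both messages are decoded at a single receiver. Since any achievable rate obeys $R_1+R_2\le\min_\ell U_\ell$ for its own tuning and the system maximizes over tunings, one has $d\le\sup_{\text{tuning}}\min_\ell U_\ell\le\min_\ell\sup_{\text{tuning}}U_\ell$. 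Computing each $\sup_{\text{tuning}}U_\ell$ produces exactly one of the inner maxima in the theorem---$\max\{1,\beta\}$, $\max\{\beta,\frac{\alpha}{2}\}$, $\max\{\alpha,\frac{\beta}{2}\}$ (and their $\alpha\le1$ analogues)---and their minimum is the claimed outer bound.

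\textbf{Achievability.} I would split along the regimes dictated by the formula and pick the tuning and code accordingly. When $\beta\ge\alpha$, interference zero-forcing is feasible: set $r_1=r_2=-h_c$, which nulls both cross gains ($g_{21}=g_{12}=0$) and leaves two clean point-to-point links of gain $h_d-h_c$; the budget check $a_j^2h_c^2\le h_r^2$ holds precisely here, and the resulting per-user prelog is $\max\{1,\alpha\}$. When the relay is very strong ($\beta>2\alpha$), pure nulling is wasteful; instead I would drive $r_1=r_2=r$ large so that $g_{jj}\approx g_{\bar j j}\approx r$, i.e.\ the effective cross ratio tends to $\alpha_{\text{eff}}=1$, where the symmetric IC delivers per-user prelog $\tfrac12$ of an effective SNR of exponent $\beta$, yielding $\frac{\beta}{2}$. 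In the remaining weak/moderate regime I would apply a Han--Kobayashi rate split on the tuned effective channel---legitimate since the inputs are independent and the relay only reshapes gains---to recover the $1+\beta-\alpha$, $1-\frac{\alpha}{2}$, and $\alpha$ branches. Taking the best scheme in each regime reproduces the inner maxima, matching the converse.

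\textbf{Main obstacle.} The real work is the regime-by-regime matching across the whole $(\alpha,\beta)$ plane. Because $d$ is a minimum of maxima, one must verify in each cell which single bound is binding, exhibit a tuning whose Han--Kobayashi (or zero-forcing, or boosting) rate meets that bound, and confirm that the relay's two competing roles---amplifying a user's own stream versus nulling the interference it inflicts---are traded off optimally under the shared budget $a_1^2r_1^2+a_2^2r_2^2\le h_r^2$. Checking continuity across the $\alpha=1$ seam and the internal thresholds ($\beta=\alpha$, $\beta=2\alpha$, $\alpha=\tfrac12$, and so on) is the delicate bookkeeping; once the effective-channel reformulation is fixed, each individual bound and scheme is comparatively routine.
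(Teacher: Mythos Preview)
Your plan is essentially the paper's own: recast the IC-SCR as a two-user Gaussian IC with tunable effective gains, take standard IC outer bounds and maximize each over the tuning to obtain the inner maxima, and match with zero-forcing, joint (MAC) decoding, and Han--Kobayashi on the achievability side. The partition into regimes and the schemes you list (ZF for $\beta\ge\alpha$ or its dual, MAC/``boosting'' for very strong relay or very strong interference, HK for the weak regime) coincide with what the paper does.

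There is, however, one concrete gap in your converse inventory. For $\alpha<1$ the binding sum-rate bound in part of the plane is $d\le\max\{1+\beta-\alpha,\alpha\}$, and neither the cut-set/single-user bound, nor ``hand one receiver the other input $X_2^n$'' (the Z-channel bound), nor the MAC bound produces the $\alpha$ branch: the Z-channel bound maximized over tunings yields $\max\{1-\tfrac{\alpha}{2},\beta,1+\beta-\alpha,1\}$, and the MAC bound is only valid in the strong-interference direction. The missing ingredient is the Etkin--Tse--Wang weak-interference genie, i.e.\ give receiver $j$ the side information $S_j^n=(h_c+r_j)X_j^n+Z_{\bar j}^n$ (the interference that user $j$ causes at the other receiver, plus that receiver's noise). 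Bounding $I(X_j^n;Y_j^n,S_j^n)$ and maximizing over the tuning is what delivers $d\le\max\{1+\beta-\alpha,\alpha\}$; in particular, in the sub-regime $\beta\le 2\alpha-1$, $\tfrac12\le\alpha\le 1$ (your HK region), this is the only bound that is tight at $d=\alpha$. So your converse sketch needs this additional genie; once it is added, the rest of your program goes through exactly as in the paper.
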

\begin{figure}[ht]
  \centering{
	\psfragscanon
	\psfrag{x}[][]{$\alpha$}
	\psfrag{y}[][]{$\beta$}
	\psfrag{o}[][]{$1$}
	\psfrag{b}[][]{$\beta$}
	\psfrag{a}[][]{$\alpha$}
	\psfrag{as}[][][1]{$\alpha$}
	\psfrag{1pbma}[c][][1][45]{$1+\beta-\alpha$}
	\psfrag{omab2}[][][1]{$1-\frac{\alpha}{2}$}
	\psfrag{a}[][]{$\alpha$}
	\psfrag{bb2}[][][1.3]{$\frac{\beta}{2}$}
	\psfrag{ab2}[][][1.3]{$\frac{\alpha}{2}$}
	\psfrag{os}[][][0.7]{\underline{1}}
	\psfrag{t}[][][0.7]{\underline{2}}
	\psfrag{th}[][][0.7]{\underline{3}}
	\psfrag{f}[][][0.7]{\underline{4}}
	\psfrag{fi}[][][0.7]{\underline{5}}
	\psfrag{s}[][][0.7]{\underline{6}}
	\psfrag{se}[][][0.7]{\underline{7}}
	\psfrag{e}[][][0.7]{\underline{8}}
	\psfrag{n}[][][0.7]{\underline{9}}
\psfragscanoff
      \includegraphics[width=0.8\textwidth]{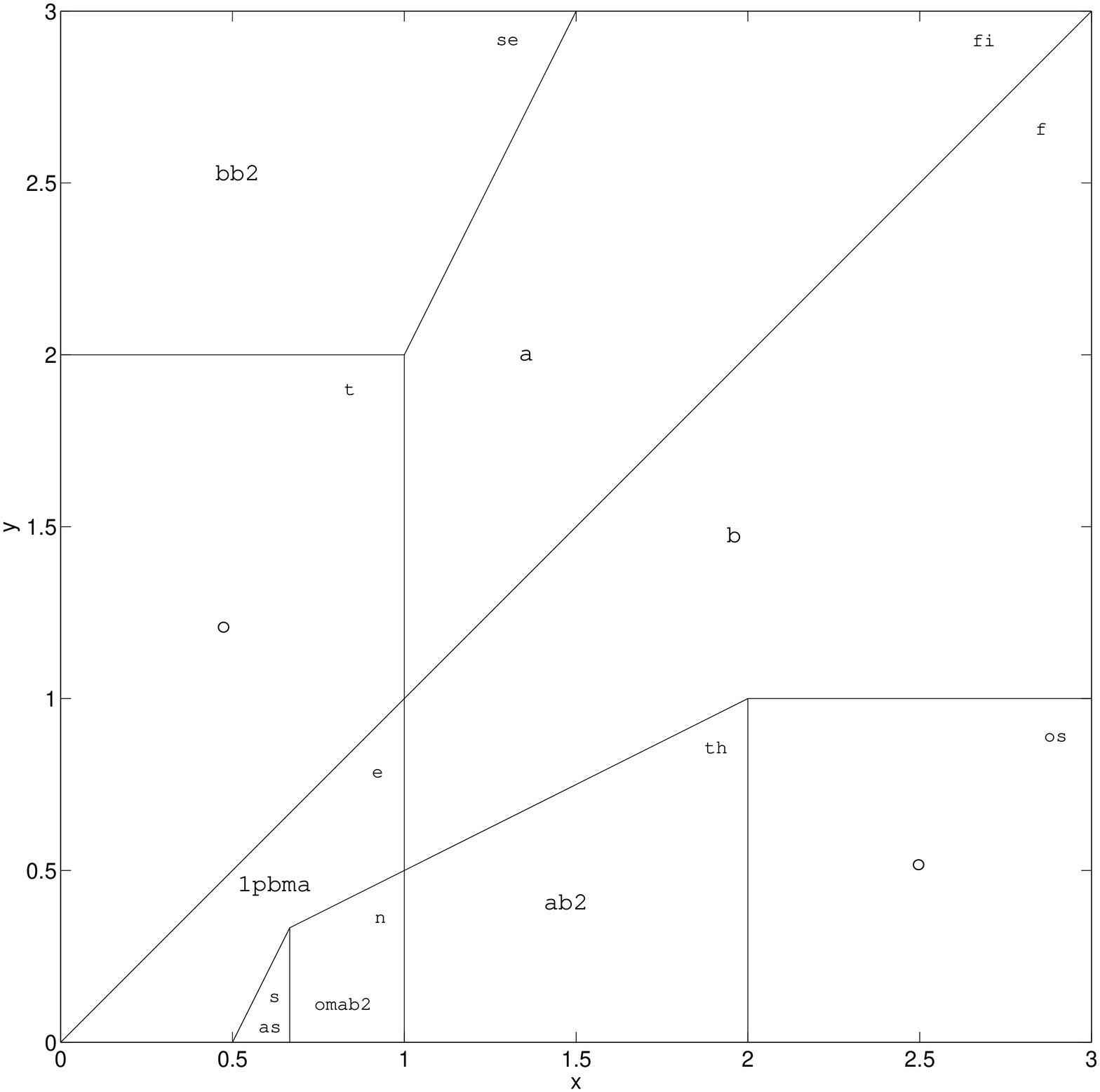}
    }
     \caption{The GDOF of the IC-SCR as a function of $\alpha$ and $\beta$.}
     \label{GDOFplane}
\end{figure}

For easier presentation of this characterization, we re-state it as follows
\begin{align*}
d=\left\{\begin{array}{lll}
1 & \text{Region \underline{1}}\\
1 &\text{Region \underline{2}} \\
\frac{\alpha}{2} & \text{Region \underline{3}} \\
\beta & \text{Region \underline{4}} \\ 
\alpha & \text{Region \underline{5}}\\
\alpha & \text{Region \underline{6}} \\
\frac{\beta}{2} & \text{Region \underline{7}} \\
1+\beta-\alpha & \text{Region \underline{8}} \\
1-\frac{\alpha}{2} & \text{Region \underline{9}}
\end{array}\right.
\end{align*}
Regions \underline{1}-\underline{9} are indicated by the numbered boxes in the Figure \ref{GDOFplane}, and are given by
\begin{enumerate}
\item[ \underline{1})] $\alpha\geq2  \text{ and }\beta\leq1$
\item[ \underline{2})] $\alpha\leq\beta\leq2\text{ and }\alpha\leq1$
\item[ \underline{3})] $\max\{1,2\beta\}\leq\alpha\leq2$ 
\item[ \underline{4})] $\alpha\geq\max\{1,\beta\}\text{ and }\beta\geq\min\left\{\frac{\alpha}{2},1\right\}$
\item[ \underline{5})] $\alpha\leq\beta\leq2\alpha\text{ and }\alpha\geq1$
\item[ \underline{6})] $\frac{\beta+1}{2}\leq\alpha\leq\frac{2}{3}$ 
\item[ \underline{7})] $\beta\geq\max\{2,2\alpha\}$ 
\item[ \underline{8})] $\min\left\{2\alpha-1,\frac{\alpha}{2}\right\}\leq\beta\leq\alpha \text{ and } \alpha\leq1$
\item[ \underline{9})] $\max\left\{\frac{2}{3},2\beta\right\}\leq\alpha\leq1$
\end{enumerate}
In regions \underline{1}, \underline{3}, \underline{6}, and \underline{9}, the IC-SCR behaves as an IC, where the relay does not have an impact on the GDOF. In this case, we can switch the relay off and still achieve the same GDOF using schemes from \cite{Tse}. However, in section \ref{GDOF}, we provide GDOF achieving schemes for these regions that incorporate the relay. 

On the other hand, in region \underline{7} where the relay channel is much stronger than both the direct and the cross channels, a BC behavior is noticed. Interestingly, for the remaining regions, an improvement over the IC and the BC is achieved by the cooperation between the sources and the signal cognitive relay.

We shortly list the achievable schemes, and discuss them in more details in section \ref{GDOF}.
\begin{itemize}
\item{Region \underline{1}:} Each receiver decodes the interference first, and then decodes his desired signal.
\item{Regions \underline{2}, \underline{4}, \underline{5}, and \underline{8}:} Sources and relay cooperate in a way that nulls the interference out.
\item{Regions \underline{3} and \underline{7}:} Each receiver decodes both signals.
\item{Regions \underline{6} and \underline{9}:} Han Kobayashi scheme.
\end{itemize}

It is worth to mention that this result for the symmetric GDOF simply generalizes to the non-symmetric case i.e. $d_1\neq d_2$. In general, we can simply multiply the symmetric GDOF $d$ by 2 to obtain a characterization of the sum of the GDOF for the non-symmetric case., where we can state $d_1+d_2=2d$.

\section{Converse to Theorem \ref{MainTheorem}}
\label{UpperBounds}
For the sake of characterizing the GDOF of the IC-SCR, we need to introduce some upper bounds. We will show later that these bounds are tight from GDOF perspective.
\begin{theorem}
\label{RateUB}
The rate region of the IC-SCR is upper bounded by
\begin{align*}
R_1&\leq C\left((h_d+h_r)^2P\right)\\
R_2&\leq C\left((h_d+h_r)^2P\right)\\
R_1+R_2&\leq\max\left\{2C\left(h_r^2\left(1-\frac{h_d}{h_c}\right)^2P\right),M_1,M_2\right\}&\text{if $h_c>\max\{h_d,h_r\}$}\\
R_1+R_2&\leq\max\left\{2C\Big((h_d-h_c)^2P\Big),M_1,M_2\right\}&\text{if $h_r>h_c>h_d$}\\
R_1+R_2&\leq\max\left\{\max_{\substack{a_1,a_2,b_1,b_2\\ h_c+c_jh_r\neq0}}C(q_4)+C(q_5),\max_{\substack{a_1,b_1,b_2\\
\phantom{x}}}C(q_4)+C(q_6)\right\}&\text{ if $h_c<h_d$}\\
R_1+R_2&\leq\max_{\substack{a_1,a_2,b_1,b_2}}C(q_7)+C(q_8) &\text{ if $h_c<h_d$}
\end{align*}
where
\begin{align*}
M_1&=\max_{\substack{b_1,b_2}}C(q_1)\\
M_2&=\max_{\substack{b_1,b_2}}\min\{C(q_2),C(q_3)\}\\
q_1&=b_1^2h_r^2\left(1-h_d/h_c\right)^2P+(h_c+b_2h_r)^2P\\
q_2&=(h_d+b_1h_r)^2P+(h_c+b_2h_r)^2P\\
q_3&=(h_d+b_2h_r)^2P+(h_c+b_1h_r)^2P\\
q_4&=(a_1h_d+b_1h_r)^2P\\
q_5&=\frac{(h_d+b_2h_r)^2P}{1+(a_1h_c+b_1h_r)^2P}\\
q_6&=\frac{b_2^2h_r^2(1-h_d/h_c)^2P}{1+(a_1h_c+b_1h_r)^2P}\\
q_7&=(a_2h_c+b_2h_r)^2P+\frac{(a_1h_d+b_1h_r)^2P}{1+(a_1h_c+b_1h_r)^2P}\\
q_8&=(a_1h_c+b_1h_r)^2P+\frac{(a_2h_d+b_2h_r)^2P}{1+(a_2h_c+b_2h_r)^2P},
\end{align*}
and $C(x)=\log(1+x)$.
\end{theorem}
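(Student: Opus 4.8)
The plan is to prove every inequality by a genie-aided Fano converse, resting on two structural features of the outputs in~(\ref{InOut}). Writing the relay coefficients as $c_j=b_j/a_j$, the relay injects $c_1h_rX_1+c_2h_rX_2$ into the two receivers \emph{symmetrically}, so the difference
\begin{align*}
Y_1-Y_2=(h_d-h_c)(X_1-X_2)+(Z_1-Z_2)
\end{align*}
is entirely relay-free and carries the information the relay can never touch; this is the source of the relay-independent constant $2C((h_d-h_c)^2P)$. The scaled difference $Y_1-\tfrac{h_d}{h_c}Y_2$, by contrast, leaves $X_1$ with coefficient $c_1h_r(1-\tfrac{h_d}{h_c})$, which is precisely where the factor $(1-\tfrac{h_d}{h_c})$ in $q_1,q_6$ and in $2C(h_r^2(1-h_d/h_c)^2P)$ comes from. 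These two combinations let me separate the parts of the message stream the relay respectively cannot and can assist.

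I would first dispatch the single-user bounds $R_j\le C((h_d+h_r)^2P)$, which are cut-set bounds: giving receiver $j$ the competing signal $X_{\bar{\jmath}}^n$ reduces its link to a point-to-point channel from $X_j$ with received SNR $(a_jh_d+b_jh_r)^2P\le(h_d+h_r)^2P$ via $a_j\le1,\ b_j^2\le1$. For the two strong-interference branches ($h_c>\max\{h_d,h_r\}$ and $h_r>h_c>h_d$) I would exploit that the cross link dominates, so each receiver can decode both messages; the classical step $I(X_2^n;Y_2^n)\le I(X_2^n;Y_1^n\mid X_1^n)$ then collapses the sum onto the multiple-access constraint at one receiver, yielding $R_1+R_2\le\min\{C(q_2),C(q_3)\}$ and, after optimizing the split $(b_1,b_2)$, the term $M_2$. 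The remaining strong-interference terms ($M_1$ through $q_1$, and the two relay constants) I would get by bounding each user through the combinations above and then relaxing $b_1,b_2\le1$ separately, which is what erases the coefficient dependence from $2C(h_r^2(1-h_d/h_c)^2P)$ and $2C((h_d-h_c)^2P)$. Each of these is the binding bound only for a range of the relay split, and the outer $\max$ is valid because every code's sum rate is controlled by whichever term applies to its own parameters.

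For the weak-interference regime $h_c<h_d$ I would run Etkin--Tse--Wang-type genie arguments with the effective gains of~(\ref{InOut}). Abbreviating $P_{11}=(a_1h_d+b_1h_r)^2P,\ P_{12}=(a_1h_c+b_1h_r)^2P,\ P_{21}=(a_2h_c+b_2h_r)^2P,\ P_{22}=(a_2h_d+b_2h_r)^2P$, the symmetric genie---handing each receiver the interfering codeword it produces at the \emph{other} receiver, plus that receiver's noise---telescopes as usual to $R_1+R_2\le C(P_{21}+\tfrac{P_{11}}{1+P_{12}})+C(P_{12}+\tfrac{P_{22}}{1+P_{21}})=C(q_7)+C(q_8)$. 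The auxiliary pair built from $q_4,q_5,q_6$ comes from an asymmetric genie: handing $X_2^n$ to receiver $1$ pins $R_1\le C(P_{11})=C(q_4)$, after which user $2$ is controlled by a residual signal-to-interference ratio obtained from the degraded ordering of $Y_2\mid X_2$ against $Y_1\mid X_2$ (valid since $h_c<h_d$). The two forms reflect a case split on whether the scheme nulls the effective cross link $h_c+c_2h_r$: off the nulling locus user $2$ keeps its direct relay path, giving $q_5$ with the monotone relaxation $a_2\le1$ supplying the numerator $(h_d+b_2h_r)^2P$, while on it only the relay-isolated residual $b_2^2h_r^2(1-h_d/h_c)^2P$ of $q_6$ survives; this is exactly why the side condition $h_c+c_j h_r\neq0$ appears and why the theorem reports the outer $\max$.

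Throughout, each inequality is established for a fixed code, hence for fixed effective parameters $(a_1,a_2,b_1,b_2)$, and only afterwards maximized over them; since capacity is the supremum over codes, this maximization is a legitimate converse, shown GDOF-tight in Section~\ref{GDOF}. I expect the real obstacle to be the weak-interference bookkeeping: because the relay drives both $X_1$ and $X_2$ into \emph{both} outputs, the entropy terms that cancel automatically in the plain Gaussian IC no longer telescope for free, and with the effective gains one must check that every differential-entropy inequality closes with the correct sign and that the chosen genie stays informative for all admissible relay coefficients---in particular as the parameters cross the nulling locus $h_c+c_jh_r=0$. Confirming that each bound is the active one precisely on its intended region of the $(\alpha,\beta)$-plane is a second, separate bookkeeping task, naturally deferred to the achievability and matching analysis.
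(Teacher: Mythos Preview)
Your proposal is close to the paper's proof and matches it exactly on the single-user bounds, on the ETW-style symmetric genie for $C(q_7)+C(q_8)$, and on the asymmetric genie with the nulling case-split that produces $C(q_4)+C(q_5)$ versus $C(q_4)+C(q_6)$. The overall architecture---fix a code, bound for its effective parameters $(a_1,a_2,b_1,b_2)$, then take the outer max---is also the paper's.

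The one substantive discrepancy is in the strong-interference branch. You propose to derive $M_1$, $2C\big(h_r^2(1-h_d/h_c)^2P\big)$, and $2C\big((h_d-h_c)^2P\big)$ from the output combinations $Y_1-Y_2$ and $Y_1-\tfrac{h_d}{h_c}Y_2$. These combinations are good intuition for \emph{where} the relay-free factor $(h_d-h_c)$ and the residual $h_r(1-h_d/h_c)$ come from, but they do not by themselves give per-user or sum-rate converses: $Y_1-Y_2$ carries only $X_1-X_2$, and the scaled difference still contains both signals. The paper instead obtains all three strong-interference terms by the \emph{same} nulling case-split you already use for $q_5/q_6$: (i) both $h_c+c_jh_r=0$ gives the interference-free point-to-point bound (which, after the power-constraint bookkeeping $a_j=-b_jh_r/h_c$ and the relaxation $b_j^2\le1$, is $2C(h_r^2(1-h_d/h_c)^2P)$ when $h_c>h_r$ and $2C((h_d-h_c)^2P)$ when $h_r>h_c$); (ii) exactly one nulled gives the one-sided MAC bound $M_1=C(q_1)$; (iii) neither nulled gives your MAC step $I(X_2^n;Y_2^n)\le I(X_2^n;Y_1^n\mid X_1^n)$ and hence $M_2$. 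Note also that your MAC step is only valid in case (iii): if the relay nulls a cross-link, the ``cross link dominates'' premise fails, which is exactly why the separate cases (i) and (ii) must be handled and why the outer $\max$ has three terms. Replace the output-difference heuristic by this three-way split and your argument lines up with the paper's.
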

\begin{proof}
See Appendix \ref{UB}.
\end{proof}

This rate upper bound can be written in terms of GDOF as follows
\begin{theorem}
The GDOF of the IC-SCR is upper bounded by
\begin{align}
\label{GDOF_SUB1}
d_1&\leq\max\{1,\beta\}&\forall \alpha,\beta\\
\label{GDOF_SUB2}
d_2&\leq\max\{1,\beta\}&\forall \alpha,\beta\\
\label{GDOF_MAC1}
d_1+d_2&\leq2\max\left\{\beta,\frac{\alpha}{2}\right\} &\text{if $\alpha>\max\{1,\beta\}$}\\
\label{GDOF_MAC2}
d_1+d_2&\leq2\max\left\{\frac{\beta}{2},\alpha\right\} &\text{if $\beta>\alpha>1$}\\
\label{GDOF_ZB}
d_1+d_2&\leq2\max\left\{1-\frac{\alpha}{2},\beta,1+\beta-\alpha,1\right\} &\text{if $\alpha<1$}\\
\label{GDOF_WIB}
d_1+d_2&\leq2\max\{1+\beta-\alpha,1\} &\text{if $\alpha<1$}
\end{align}
\end{theorem}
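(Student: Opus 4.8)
The plan is to obtain each of the six GDOF inequalities by passing the corresponding rate inequality of Theorem~\ref{RateUB} through the limit in \eqref{def}. For a nonnegative quantity $q=q(\rho)$ write $e[q]:=\lim_{\rho\to\infty}\frac{\log q}{\log\rho}$ for its exponent; the definitions of $\alpha,\beta$ give $e[h_d^2P]=1$, $e[h_c^2P]=\alpha$, $e[h_r^2P]=\beta$, so $e[h_d]=\tfrac12$, $e[h_c]=\tfrac\alpha2$, $e[h_r]=\tfrac\beta2$. The one analytic fact I need is that dividing a capacity term by $\log\rho$ collapses it to its clipped exponent,
\[
\lim_{\rho\to\infty}\frac{C(q)}{\log\rho}=\lim_{\rho\to\infty}\frac{\log(1+q)}{\log\rho}=\max\{e[q],0\}.
\]
Writing $d_j=\lim_{\rho\to\infty}R_j/\log\rho$ (so that $d=(d_1+d_2)/2$), applying this to $R_1\le C((h_d+h_r)^2P)$ with $e[(h_d+h_r)^2P]=2\max\{e[h_d],e[h_r]\}=\max\{1,\beta\}$ yields \eqref{GDOF_SUB1}--\eqref{GDOF_SUB2} at once.

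For the sum-rate bounds I would first record the exponent rules $e[xy]=e[x]+e[y]$, $e[x/y]=e[x]-e[y]$, $e[x+y]=\max\{e[x],e[y]\}$, and $e[(x\pm y)^2]\le 2\max\{e[x],e[y]\}$ for squares that need not cancel; the constraints $a_j^2\le1,\sum_j b_j^2\le1$ make every $a_j,b_j$ a bounded constant, hence of nonpositive exponent. The decisive preliminary is the exponent of the recurring factor $1-h_d/h_c$: since $e[h_d/h_c]=(1-\alpha)/2$, it tends to $1$ when $\alpha>1$ (so $(1-h_d/h_c)^2$ has exponent $0$) but diverges when $\alpha<1$, giving $e[(1-h_d/h_c)^2]=1-\alpha$. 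This dichotomy separates the $\alpha>1$ bounds from the $\alpha<1$ bounds and produces the $1+\beta-\alpha$ terms, since $e[h_r^2(1-h_d/h_c)^2P]=\beta+(1-\alpha)$ is exactly the numerator exponent of $q_6$. With these in hand, \eqref{GDOF_MAC1} and \eqref{GDOF_MAC2} follow by taking $e[\cdot]$ of each argument in the first two sum-rate bounds and using the governing conditions to resolve the maxima: the $2C(\cdot)$ terms give $2\beta$ (resp. $2\alpha$), while $M_1,M_2$ contribute $\alpha$ (resp. $\beta$), and $2\max\{\beta,\tfrac\alpha2\}=\max\{2\beta,\alpha\}$ reorganizes the outcome.

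The main work, and the principal obstacle, is the last two bounds, whose rate expressions contain a maximization over $(a_1,a_2,b_1,b_2)$ of sums of SINR-type terms $q_5,\dots,q_8$ in which the same beamforming parameters appear in both a numerator and a denominator (e.g. $a_1,b_1$ drive the signal $q_4$ and simultaneously the interference floor in the denominators of $q_5,q_6$, while in $q_7,q_8$ the quantities $(a_1h_c+b_1h_r)^2P$ and $(a_2h_c+b_2h_r)^2P$ enter crosswise). I would translate the outer maximization into a maximization of summed exponents, $d_1+d_2\le\max_{(a_j,b_j)}\big(\max\{e[q_7],0\}+\max\{e[q_8],0\}\big)$ and similarly for the $q_4,q_5,q_6$ bound, and then optimize at the level of exponents. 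The trade-off is transparent once phrased this way: a beamforming choice can null an interference denominator (sending its exponent to $-\infty$) only by tying the associated signal numerator to the weaker channel, and whether such a null is feasible under $a_j^2\le1,\sum_j b_j^2\le1$ turns on the sign of $e[h_c/h_r]=(\alpha-\beta)/2$. Enumerating the resulting corner choices---full power, interference-nulling, or switching off $a_j$ or $b_j$---and checking that each summed exponent is at most one of the listed quantities $2\max\{1-\tfrac\alpha2,\beta,1+\beta-\alpha,1\}$ (resp. $2\max\{1+\beta-\alpha,1\}$, where the crude term-wise estimate $e[q_7],e[q_8]\le\max\{1,\beta\}\le 1+\beta-\alpha$ already suffices for $\alpha<1$) delivers \eqref{GDOF_ZB} and \eqref{GDOF_WIB}.

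Finally, I would justify the soft step hidden throughout: the interchange $\lim_{\rho}\frac1{\log\rho}\max_\theta C(q(\theta,\rho))=\max_\theta e[q(\theta,\cdot)]$ with $\theta=(a_1,a_2,b_1,b_2)$. Each $q(\theta,\rho)$ is a rational function of $\theta$ whose coefficients are fixed powers of $\rho$, so on the compact feasible set the normalized logarithm converges while the maximizer can be taken along a convergent subsequence, letting the limit pass through the maximum. I expect this compactness argument together with the bookkeeping of the nulling cases to be the only delicate points; everything else is exponent arithmetic.
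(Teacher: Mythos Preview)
Your proposal is correct and mirrors the paper's own argument: the paper derives each GDOF inequality by computing the high-SNR exponent of the corresponding rate bound in Theorem~\ref{RateUB} (these computations appear inline at the end of each subsection of Appendix~\ref{UB}), with the same nulling/non-nulling case split driving the $q_4,q_5$ vs.\ $q_4,q_6$ distinction that produces the $1-\tfrac{\alpha}{2}$, $\beta$, $1+\beta-\alpha$, and $1$ terms in \eqref{GDOF_ZB}. Your explicit compactness justification for interchanging $\lim_\rho$ with $\max_\theta$ is a welcome addition the paper leaves implicit; the one imprecision is the parenthetical ``$\max\{1,\beta\}\le 1+\beta-\alpha$'' in your WIB step, which fails when $\beta<\alpha<1$---but the bound you actually need is $\max\{1,\beta\}\le\max\{1+\beta-\alpha,1\}$, which does hold for all $\alpha<1$, so the conclusion stands.
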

If we consider symmetric rates, i.e. $R_1=R_2$ or equivalently $d_1=d_2=d$, we can write this GDOF upper bound as
\begin{align*}
d\leq\left\{\begin{array}{lr}
\min\left\{ \max\{1,\beta\}, \max\left\{\beta,\frac{\alpha}{2}\right\},\max\left\{\alpha,\frac{\beta}{2}\right\} \right\},&\text{if } \alpha>1,\\
\\
\min\left\{\max\{1+\beta-\alpha,\alpha\},\max\left\{1-\frac{\alpha}{2},\beta,1+\beta-\alpha,1\right\}\right\}& \text{otherwise.}\end{array}\right.
\end{align*}
The regions where each term in the GDOF bound is active are given in Section \ref{Summary} and are enumerated as shown in Figure \ref{GDOFplane}. In what follows, we provide a GDOF achieving scheme for each of these regions.

\section{Achievability proof for Theorem \ref{MainTheorem}}
\label{GDOF}
In this section, we characterize the GDOF of the IC-SCR by providing GDOF achieving schemes. We characterize the regions in the following order: \underline{1}, \underline{2} and \underline{5}, \underline{4} and \underline{8}, \underline{3} and \underline{7}, \underline{6} and \underline{9}.

\subsection{Very Strong Interference (Region \underline{1})}
Recall that the very strong interference for the IC is defined for $\alpha>2$. In our case, for the IC-SCR, the very strong interference is defined for $\alpha>\max\{2,2\beta\}$.
We set $a_1=a_2=b_1=b_2=1/\sqrt2$. In this case, the received signals are
\begin{align*}
y_1^n=(h_d+h_r)x_1^n+(h_c+h_r)x_2^n+z_1^n\nonumber\\
y_2^n=(h_d+h_r)x_2^n+(h_c+h_r)x_1^n+z_2^n
\end{align*}
where $x_{j,i}\sim\mathcal{CN}(0,P/2)$, and $i\in\{1,\dots,n\}$ and $j\in\{1,2\}$. Receiver 1 decodes $x_2^n$ while treating his own signal as noise, and then decodes $x_1^n$. Similarly for receiver 2. As a result, the following rates are achieved
\begin{align*}
R_1&\leq\log\left(1+\frac{(h_d+h_r)^2P}{2}\right)\\
R_2&\leq\log\left(1+\frac{(h_d+h_r)^2P}{2}\right)
\end{align*}
as long as $I(x_2^n;y_1^n)\geq I(x_2^n;y_2^n|x_1^n)$ and $I(x_1^n;y_2^n)\geq I(x_1^n;y_1^n|x_2^n)$. In other words
\begin{align*}
\log\left(1+\frac{(h_c+h_r)^2P/2}{1+(h_d+h_r)^2P/2}\right)\geq\log(1+(h_d+h_r)^2P/2)
\end{align*}
i.e.
\begin{equation*}
(h_c+h_r)^2P\geq (h_d+h_r)^2P+(h_d+h_r)^4P^2/2.
\end{equation*}
And in terms GDOF, this gives
\begin{align*}
d\leq\max\{1,\beta\}, \text{ if } \alpha\geq\max\{2,2\beta\},
\end{align*}
achieving the single user bound given in (\ref{GDOF_SUB1}) if $\alpha\geq\max\{2,2\beta\}$. This scheme characterizes region \underline{1}.

\subsection{Zero Forcing (Regions \underline{2}, \underline{4}, \underline{5} and \underline{8})}
The relay and the transmitters can cooperate in such a way that cancels the interference completely. 
\subsubsection{$h_c\leq h_r$ (Regions \underline{2} and \underline{5})}
If $h_c\leq h_r$, the relay transmits $x_r^n=-\frac{h_c}{h_r}(x_1^n+x_2^n)$, where $x_{j,i}\sim\mathcal{CN}(0,P/2)$, $i\in\{1,\dots,n\}$, $j\in\{1,2\}$, and the sources transmit $x_1^n$ and $x_2^n$. Thus, the received signals are
\begin{align*}
y_1^n=(h_d-h_c)x_1^n+z_1^n,\nonumber\\
y_2^n=(h_d-h_c)x_2^n+z_2^n,
\end{align*}
and the achievable rates are given by
\begin{align*}
R_1&\leq\log\left(1+\frac{(h_d-h_c)^2P}{2}\right),\\
R_2&\leq\log\left(1+\frac{(h_d-h_c)^2P}{2}\right).
\end{align*}
Therefore, the achieved GDOF is
\begin{align*}
d\leq\max\{1,\alpha\}\quad \text{if } \alpha\leq\beta.
\end{align*}
This achieves the upper bound in (\ref{GDOF_SUB1}) if $1\geq\beta\geq\alpha$, and the upper bound in (\ref{GDOF_MAC2}) if $\max\{1,\alpha\}\leq\beta\leq \max\{2,2\alpha\}$. In other words, it achieves the upper bounds in regions \underline{2} and \underline{5}.

\subsubsection{$h_c\geq h_r$ (Regions \underline{4} and \underline{8})}
Otherwise, if $h_c\geq h_r$, the relay sends $x_r^n=x_1^n+x_2^n$, and the sources transmit $-\frac{h_r}{h_c}x_1^n$ and $-\frac{h_r}{h_c}x_2^n$. Thus, the received signals are given by
\begin{align*}
y_1^n=h_r\left(\frac{h_d}{h_c}-1\right)x_1^n+z_1^n,\nonumber\\
y_2^n=h_r\left(\frac{h_d}{h_c}-1\right)x_2^n+z_2^n,
\end{align*}
and the following rates are achieved
\begin{align*}
R_1&\leq\log\left(1+\left(\frac{h_d}{h_c}-1\right)^2\frac{h_r^2P}{2}\right),\\
R_2&\leq\log\left(1+\left(\frac{h_d}{h_c}-1\right)^2\frac{h_r^2P}{2}\right).
\end{align*}
Therefore, the achieved GDOF is
\begin{align*}
d\leq\max\{1+\beta-\alpha,\beta\}.
\end{align*}
This achieves  the upper bound in (\ref{GDOF_MAC1}) if ${1}/{2}\leq{\alpha}/{2}\leq\beta\leq\alpha$, the upper bound in (\ref{GDOF_ZB}) if $2\alpha-1\leq\beta\leq\alpha$, and the upper bound in (\ref{GDOF_WIB}) if $2\alpha-1\leq\beta\leq\alpha$. Hence, we characterize regions \underline{4} and \underline{8}.

\subsection{MAC (Regions \underline{3} and \underline{7})}
We set $a_1=a_2=b_1=b_2=1/\sqrt2$. In this case, the received signals are
\begin{align*}
y_1^n=(h_d+h_r)x_1^n+(h_c+h_r)x_2^n+z_1^n,\nonumber\\
y_2^n=(h_d+h_r)x_2^n+(h_c+h_r)x_1^n+z_2^n.
\end{align*}
The receivers jointly decode both signals $x_1^n$ and $x_2^n$ achieving
\begin{equation*}
R_1+R_2\leq\log(1+(h_d+h_r)^2P/2+(h_c+h_r)^2P/2).
\end{equation*}
Thus, the achievable GDOF is
\begin{align*}
d\leq\frac{1}{2}\max\{1,\alpha,\beta\}.
\end{align*}
This achieves the upper bound in (\ref{GDOF_MAC1}) if $\alpha\geq\max\{1,2\beta\}$, the upper bound in (\ref{GDOF_MAC2}) if $1\leq\alpha\leq\frac{\beta}{2}$, and the upper bound in (\ref{GDOF_ZB}) if $\beta\geq2$, and thus, characterizes regions \underline{3} and \underline{7}.

\subsection{Han-Kobayashi (Regions \underline{6} and \underline{9})}
The remaining regions (\underline{6} and \underline{9}) can be easily achieved by operating the IC-SCR as an IC using the Han Kobayashi scheme given in \cite{Tse}, i.e. with the relay is switched off. However, we provide a more general Han-Kobayashi scheme similar to \cite{Tse} that achieves the upper bounds for $\beta\leq\alpha\leq1$, which incorporates the relay. Let the transmit signals be given by:
\begin{align*}
x_1^n&=a(w_1^n+u_1^n),\\
x_2^n&=a(w_2^n+u_2^n).
\end{align*}
and let the relay transmit signal be:
\begin{align*}
x_r^n=\frac{b}{a}(x_1^n+x_2^n),
\end{align*}
where $w_{j,i}\sim\mathcal{CN}(0,P_w)$ carries a common message to be decoded by both receivers, and $u_{j,i}\sim\mathcal{CN}(0,P_u)$ carries a private message to be decoded by receiver $j$, such that $P_w+P_u=P/2$, $i\in\{1,\dots,n\}$ and $j\in\{1,2\}$. The receivers decode the common messages first while treating interference as noise. Thus, the rates of the common messages are bounded by
\begin{align*}
&\hspace{-0.6cm}R_{w1}+R_{w2}\leq\\
&\log\left(1+\frac{\big((ah_d+bh_r)^2+(ah_c+bh_r)^2\big)P_w}{1+\big((ah_d+bh_r)^2+(ah_c+bh_r)^2\big)P_u}\right).
\end{align*}
Moreover, let us fix $P_u$ such that $(ah_c+bh_r)^2P_u=1$, i.e. at the noise level. For readability, we will denote $(ah_d+bh_r)^2\frac{P}{2}$ by $S$ and $(ah_c+bh_r)^2\frac{P}{2}$ by $I$. Thus, we can write
\begin{align*}
R_{w1}+R_{w2}\leq\log\left(1+\frac{(S+I)(I-1)}{S+2I}\right).
\end{align*}
We also have the bound
\begin{align*}
R_{w1}+R_{w2}\leq2\log\left(1+\frac{I(I-1)}{S+2I}\right),
\end{align*}
as the sum of the individual rate constraints of decoding $w_1^n$ at receiver 2 and $w_2^n$ at receiver 1 while treating $u_1^n$ and $u_2^n$ as noise. The receivers decode the private message after canceling the effect of $w_1^n$ and $w_2^n$ from their received signal, while treating the undesired private messages as noise. Hence, the rate of the private message is given by
\begin{align*}
R_u\leq\log\left(1+\frac{S}{2I}\right)
\end{align*}
Thus, the symmetric rate can be written as 
\begin{align*}
&\hspace{-0.7cm}R\leq\min\left\{\log\left(1+I+\frac{S}{I}\right)-1,\right.\\
&\hspace{0.8cm}\left.\frac{1}{2}\log(1+S+I)+\frac{1}{2}\log\left(2+\frac{S}{I}\right)-1 \right\}.
\end{align*}
Notice that by setting $a=-b\frac{h_r}{h_c}$, we force $I$ to 0, and the system is interference free. In this case, we set $b=1$, and thus $S=h_r(\frac{h_d}{h_c}-1)P$. So, this scheme achieves the following GDOF
\begin{align*}
d\leq\min\left\{\max\left\{1-\frac{\alpha}{2},1+\beta-\alpha\right\},\max\left\{\alpha,1+\beta-\alpha\right\}\right\},
\end{align*}
where $d=1+\beta-\alpha$ is obtained if we choose $a=-b\frac{h_r}{h_c}$, otherwise, we obtain $d=1-\frac{\alpha}{2}$ or $d=\alpha$. This achieves the upper bound for $\beta\leq\alpha\leq1$ given in (\ref{GDOF_ZB}) and (\ref{GDOF_WIB}), and hence characterizes regions \underline{6} and \underline{9}. This scheme also provides an alternative for achieving the GDOF in region \underline{8}.

Figure \ref{GDOFalpha} shows the GDOF of the IC-SCR for different values of $\beta$. It can be seen in this figure that if $\beta>0$, the IC-CR has more GDOF compared to the IC. Moreover, $d$ is strictly increasing in $\beta$. 
\begin{figure}
  \centering{
	\psfragscanon
	\psfrag{x}[][]{$\alpha$}
	\psfrag{y}[][]{GDOF}
\psfragscanoff
      \includegraphics[width=0.75\textwidth]{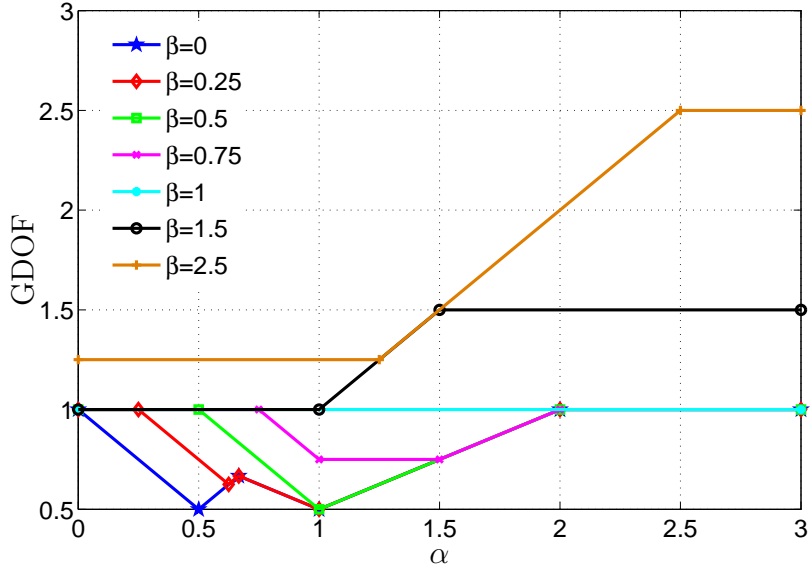}
    }
     \caption{The GDOF of the IC-SCR as a function of $\alpha$ for different values of $\beta$.}
     \label{GDOFalpha}
\end{figure}

\section{Conclusion}
\label{Conc}
A set of upper bounds for the the interference channel with a signal cognitive relay was given. These upper bounds give us a set of upper bounds on the generalized degrees of freedom of this channel.

We also provide achievable schemes for the channel under consideration, and show that these achievable schemes coincide with the upper bounds from a generalized degrees of freedom perspective. As a result, a characterization of the generalized degrees of freedom of the interference channel with a signal cognitive relay is provided.

Further extension of this work would be characterizing the generalized degrees of freedom of the interference channel with a message cognitive relay.

\begin{appendix}
\section{Proof of Theorem \ref{RateUB}}
We give here a proof of Theorem \ref{RateUB}. At the end of each subsection, the corresponding GDOF upper bound is given.
\label{UB}
\subsection{Single User Bound}
\label{SUB}
The single user bound follows from Fano's inequality
\begin{eqnarray*}
nR_1&\leq& I(X_1^n;Y_1^n|X_2^n)+n\epsilon_1,\\
nR_2&\leq& I(X_2^n;Y_2^n|X_1^n)+n\epsilon_2,
\end{eqnarray*}
where $\epsilon_1,\epsilon_2\to0$ as $n\to\infty$.
For $R_1$, we obtain
\begin{eqnarray*}
nR_1&\leq& h(Y_1^n|X_2^n)-h(Y_1^n|X_1^n,X_2^n)+n\epsilon_1\\
&\stackrel{(a)}{\leq}&\sum_{i=1}^nh(Y_{1,i}|X_{2,i})-h(Z_{1,i})+n\epsilon_1\\
&=&\sum_{i=1}^nh\left(\left(h_d+\frac{b_1}{a_1}h_r\right)X_{1,i}+Z_{1,i}\right)-h(Z_{1,i})+n\epsilon_1\\
&\stackrel{(b)}{\leq}&\sum_{i=1}^n\log\left(1+\left(h_d+\frac{b_1}{a_1}h_r\right)^2P_{1,i}\right)+n\epsilon_1\\
&\stackrel{(c)}{\leq}&n\log(1+(h_d+h_r)^2P)+n\epsilon_1,
\end{eqnarray*}
where (a) follows since conditioning does not increase entropy, (b) follows since the circularly symmetric complex Gaussian distribution maximizes differential entropy, and (c) follows from using Jensen's inequality applied to a function that is clearly concave, and from maximizing by setting $a_1^2=1$ and $b_1^2=1$. similarly for $R_2$, we obtain
\begin{eqnarray*}
nR_2&\leq&n\log(1+(h_d+h_r)^2P)+n\epsilon_2.
\end{eqnarray*}
Letting $n\to\infty$, we get
\begin{eqnarray*}
R_1&\leq&\log(1+(h_d+h_r)^2P),\\
R_2&\leq&\log(1+(h_d+h_r)^2P).
\end{eqnarray*}
In terms of GDOF, this is equivalent to
\begin{eqnarray*}
d&\leq&\lim_{\rho\to\infty}\frac{2\log(1+(\rho^{1/2}+\rho^{\beta/2})^2)}{2\log(\rho)}\\
&\approx&\lim_{\rho\to\infty}\frac{2\log(\max\{\rho,\rho^\beta\})}{2\log(\rho)}\\
&=&\max\{1,\beta\}.
\end{eqnarray*}

\subsection{The MAC bound for $h_c>h_d$ and $h_c>h_r$}
\label{MACB1}
As will be shown later, the MAC bound holds if $h_c>h_d$. Notice that the received signals of the IC-SCR are similar to that of the IC, with the difference that the channel coefficients can be controlled in our case by the choice of $a_j$ and $b_j$. Here, we distinguish between three cases:
\begin{itemize}
\item[i)] $h_c+\frac{b_1}{a_1}h_r=0$ and $h_c+\frac{b_2}{a_2}h_r= 0$,
\item[ii)] $h_c+\frac{b_1}{a_1}h_r=0$ and $h_c+\frac{b_2}{a_2}h_r\neq 0$,
\item[iii)] $h_c+\frac{b_1}{a_1}h_r\neq 0$ and $h_c+\frac{b_2}{a_2}h_r\neq 0$.
\end{itemize}
We obtain an upper bound for each case $R_1^i+R_2^i$, $R_1^{ii}+R_2^{ii}$, and $R_1^{iii}+R_2^{iii}$, and the upper bound that governs all cases is then
\begin{eqnarray*}
R_1+R_2\leq\max\{R_1^i+R_2^i,R_1^{ii}+R_2^{ii},R_1^{iii}+R_2^{iii}\}.
\end{eqnarray*}
The $\max$ operation follows since we can choose to be in case $i$, $ii$ or $iii$, and to maximize the upper bound, we choose to be in the case that results in a larger upper bound.

\begin{itemize}
\item[i)]
For case $i$, we have $h_c+\frac{b_j}{a_j}h_r=0$, for $j\in\{1,2\}$ i.e. 
\begin{eqnarray*}
\frac{b_j}{a_j}=-\frac{h_c}{h_r}.
\end{eqnarray*}
Thus, $a_j=-b_jh_r/h_c$. In this case, we have an interference free system, and the received signals become
\begin{eqnarray*}
Y_1^n=(h_d-h_c)X_1^n+Z_1^n,\\
Y_2^n=(h_d-h_c)X_2^n+Z_2^n.
\end{eqnarray*}
After maximizing over $b_j$ by setting $b_j^2=1$, we can upper bound the rates by
\begin{eqnarray*}
R_1^i&\leq&\log\left(1+\left(h_r-\frac{h_rh_d}{h_c}\right)^2P\right),\\
R_2^i&\leq&\log\left(1+\left(h_r-\frac{h_rh_d}{h_c}\right)^2P\right).
\end{eqnarray*}
Since $h_c>h_d$, i.e. $\alpha>1$, this corresponds to
\begin{eqnarray*}
d\leq\beta.
\end{eqnarray*}

\item[ii)]
In this case, only $h_c+\frac{b_1}{a_1}h_r=0$, so
\begin{eqnarray*}
\frac{b_1}{a_1}=-\frac{h_c}{h_r},
\end{eqnarray*}
i.e. $a_1=-b_1h_r/h_c$ and the received signals become
\begin{eqnarray*}
Y_1^n&=&(h_d-h_c)X_1^n+\left(h_c+\frac{b_2}{a_2}h_r\right)X_2^n+Z_1^n,\\
Y_2^n&=&\left(h_d+\frac{b_2}{a_2}h_r\right)X_2^n+Z_2^n.
\end{eqnarray*}
Consider a rate pair $(R_1^{ii},R_2^{ii})$ in the rate region of the IC-SCR. With this rate pair, both receivers are able to decode their desired signals. Notice that in this case, after decoding $X_1^n$, receiver 1 is able to decode $X_2^n$ as long as $h_c>h_d$, which is the case we are considering. It follows that the rate region in this case is included in the rate region of the MAC from $X_1^n$ and $X_2^n$ to $Y_1^n$. Thus, the sum rate is upper bounded by
\begin{eqnarray*}
R_1^{ii}+R_2^{ii}\leq\max_{\substack{b_1,b_2}}C(q_1),
\end{eqnarray*}
such that $b_1^2+b_2^2\leq1$, where
\begin{eqnarray*}
q_1=b_1^2\left(h_r-\frac{h_rh_d}{h_c}\right)^2P+(h_c+b_2h_r)^2P.
\end{eqnarray*}
Which translates to
\begin{eqnarray*}
d\leq\frac{1}{2}\max\{\beta,1+\beta-\alpha,\alpha\}=\frac{\alpha}{2},
\end{eqnarray*}
for $\alpha\geq\beta$ and $\alpha\geq1$.

\item[iii)]
If both $a_1h_c+b_1h_r\neq 0$ and $a_2h_c+b_2h_r\neq 0$, then both received signals suffer interference, and the received signals are as given in (\ref{InOut}). Consider a rate pair $(R_1^{iii},R_2^{iii})$ in the rate region of the IC-SCR. This means that receiver 1 is able to decode its desired signal, and then construct
\begin{eqnarray*}
\frac{h_d+\frac{b_2}{a_2}h_r}{h_c+\frac{b_2}{a_2}h_r}\left(Y_1^n-\left(h_d+\frac{b_1}{a_1}h_r\right)X_1^n\right)+\left(h_c+\frac{b_1}{a_1}h_r\right)X_1^n=Y_2^n-Z_2^n+\tilde{Z}_2^n,
\end{eqnarray*}
where $\tilde{Z}_2\sim\mathcal{CN}(0,(h_d+\frac{b_2}{a_2}h_r)^2(h_c+\frac{b_2}{a_2}h_r)^{-2})$. This is a less noisy version of $Y_2^n$ since $h_c>h_d$. Thus, receiver 1 is able to decode the $X_2$ after decoding $X_1$. Similarly for the second receiver. As a result, the rate region is included in that of the MAC from $X_1^n$ and $X_2^n$ to both $Y_1^n$ and $Y_2^n$. Therefore, the sum rate is bounded by:
\begin{eqnarray*}
R_1^{iii}+R_2^{iii}\leq&\max_{\substack{b_1,b_2}} \min\{C(q_2),C(q_3)\},
\end{eqnarray*}
such that $b_1^2+b_2^2\leq1$, where we use $a_1^2=a_2^2=1$ to maximize the bound with
\begin{eqnarray*}
q_2&=&(h_d+b_1h_r)^2P+(h_c+b_2h_r)^2P,\\
q_3&=&(h_d+b_2h_r)^2P+(h_c+b_1h_r)^2P.
\end{eqnarray*}
Or equivalently
\begin{eqnarray*}
d\leq\frac{1}{2}\max\{1,\alpha,\beta\}=\frac{\alpha}{2}
\end{eqnarray*}
\end{itemize}

As a result, we have the following sum rate upper bound
\begin{eqnarray*}
R_1+R_2&\leq&\max\{R_1^i+R_2^i,R_1^{ii}+R_2^{ii},R_1^{iii}+R_2^{iii}\}\\
&=&\max\left\{2\log\left(1+\left(h_r-\frac{h_rh_d}{h_c}\right)^2P\right),\max_{\substack{b_1,b_2}}C(q_1),\max_{\substack{b_1,b_2}}\min\{C(q_2),C(q_3)\}\right\},
\end{eqnarray*}
such that $b_1^2+b_2^2\leq1$.
Or in the GDOF language
\begin{eqnarray*}
d\leq\max\left\{\beta,\frac{\alpha}{2}\right\}.
\end{eqnarray*}

\subsection{The MAC bound for $h_d<h_c<h_r$}
\label{MACB2}
Similarly, it can be shown that the sum rate in this case is upper bounded by:
\begin{eqnarray*}
R_1+R_2&\leq&\max\left\{2\log(1+(h_d-h_c)^2P),\max_{\substack{b_1,b_2}}C(q_1),\max_{\substack{b_1,b_2}}\min\{C(q_2),C(q_3)\}\right\},
\end{eqnarray*}
such that $b_1^2+b_2^2\leq1$, i.e.
\begin{eqnarray*}
d\leq\max\left\{\alpha,\frac{\beta}{2}\right\}.
\end{eqnarray*}

\subsection{The Z channel bound}
\label{ZB}
We move our focus now to the case where $h_c<h_d$. For this bound, we are going to need the following lemma
\begin{lemma}[Worst Case Noise \cite{DigCov}]
\label{WCN}
Given a sequence of random variables $X^n$ with \\
$\sum_{j=1}^n\mathbb{E}(X^2_j)\leq~nP$, and a noise sequence $Z^n$ with $Z_j\sim\mathcal{CN}(0,\sigma^2)$, then $$h(X^n)-h(X^n+Z^n)\leq h(X_G^n)-h(X_G^n+Z^n).$$ where $X_G\sim\mathcal{CN}(0,P)$.
\end{lemma}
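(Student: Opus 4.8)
The plan is to deduce the lemma from the entropy power inequality (EPI), using the two facts that the circularly symmetric Gaussian both maximizes the input entropy under the power constraint and renders the EPI an equality. The key observation is that, once the EPI is invoked, the difference $h(X^n)-h(X^n+Z^n)$ can be upper bounded by a quantity that depends on the input only through the scalar $h(X^n)$ and is monotone in it; it then suffices to maximize $h(X^n)$.

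First I would apply the EPI, in its complex form, to the independent sequences $X^n$ and $Z^n$:
\[
e^{h(X^n+Z^n)/n}\ge e^{h(X^n)/n}+e^{h(Z^n)/n}.
\]
Taking logarithms and rearranging gives
\[
h(X^n)-h(X^n+Z^n)\le n\log\frac{e^{h(X^n)/n}}{e^{h(X^n)/n}+e^{h(Z^n)/n}}.
\]
Writing $s=e^{h(Z^n)/n}$, which is a constant since the noise statistics are fixed, and $f(a)=a-n\log\big(e^{a/n}+s\big)$, the right-hand side is exactly $f(h(X^n))$, and the one-line computation $f'(a)=s/(e^{a/n}+s)>0$ shows that $f$ is strictly increasing in $h(X^n)$.

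Next I would bound the input entropy using maximal entropy of the Gaussian together with Jensen's inequality. Since a per-coordinate second-moment constraint bounds each marginal entropy by $h(X_j)\le\log(\pi e\,\mathbb{E}|X_j|^2)$, and since joint entropy is subadditive, the total power constraint gives
\[
h(X^n)\le\sum_{j=1}^n\log\big(\pi e\,\mathbb{E}|X_j|^2\big)\le n\log\Big(\pi e\,\tfrac1n\textstyle\sum_{j=1}^n\mathbb{E}|X_j|^2\Big)\le n\log(\pi e P)=h(X_G^n),
\]
where the middle step uses concavity of $\log$. Monotonicity of $f$ then yields $h(X^n)-h(X^n+Z^n)\le f(h(X_G^n))$.

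Finally I would evaluate $f$ at the Gaussian value directly: for $X_G^n\sim\mathcal{CN}(0,PI)$ and $Z^n\sim\mathcal{CN}(0,\sigma^2I)$ one has $h(X_G^n)=n\log(\pi eP)$ and $h(X_G^n+Z^n)=n\log(\pi e(P+\sigma^2))$, so that $f(h(X_G^n))=n\log\frac{P}{P+\sigma^2}=h(X_G^n)-h(X_G^n+Z^n)$; equivalently, the EPI is tight because both summands are Gaussian with proportional covariances. Chaining the inequalities proves the claim. The point needing the most care is the entropy bound: the hypothesis constrains the total power $\sum_j\mathbb{E}|X_j|^2\le nP$ rather than each coordinate, so the Gaussian maximal-entropy bound must be routed through Jensen's inequality as above; with that settled, the monotonicity of $f$ and the tightness of the EPI for the Gaussian input do the rest.
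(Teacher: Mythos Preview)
Your argument is correct. The paper does not supply its own proof of this lemma; it merely quotes the result from \cite{DigCov} and uses it as a black box in Appendix~\ref{ZB}. So there is no ``paper's approach'' to compare against, and your EPI-based derivation stands on its own.

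A couple of small points worth making explicit. First, the EPI step and the equality case in your final evaluation both rely on $Z^n$ being independent of $X^n$ and on $Z^n$ having covariance $\sigma^2 I$ (i.e.\ i.i.d.\ coordinates, not merely identically distributed marginals); the lemma as stated only writes $Z_j\sim\mathcal{CN}(0,\sigma^2)$, but in the paper's AWGN setting the noise is indeed i.i.d.\ and independent of the inputs, so this is harmless. Second, your entropy bound $h(X_j)\le\log(\pi e\,\mathbb{E}|X_j|^2)$ tacitly assumes each coordinate has a density and positive variance; if not, $h(X^n)=-\infty$ and the inequality is trivial. With those caveats noted, the chain
\[
h(X^n)-h(X^n+Z^n)\ \le\ f\big(h(X^n)\big)\ \le\ f\big(h(X_G^n)\big)\ =\ h(X_G^n)-h(X_G^n+Z^n)
\]
is valid, with the first inequality the EPI, the second the monotonicity of $f$ combined with $h(X^n)\le h(X_G^n)$, and the equality the tightness of the EPI for jointly Gaussian summands with proportional covariances.
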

Now, let us assume that $h_c+\frac{b_j}{a_j}h_r\neq0$ for $j\in\{1,2\}$. In this case, we can upper bound the sum rate using the Z channel bound as follows
\begin{eqnarray*}
n(R_1+R_2)&\stackrel{(a)}{\leq}& I(X_1^n;Y_1^n,X_2^n)+I(X_2^n;Y_2^n)+n\epsilon\\
&\stackrel{(b)}{=}& I(X_1^n;Y_1^n|X_2^n)+I(X_2^n;Y_2^n)+n\epsilon\\
&=&h(Y_1^n|X_2^n)-h(Y_1^n|X_1^n,X_2^n)+h(Y_2^n)-h(Y_2^n|X_2^n)\\
&\stackrel{(c)}{\leq}&nh\left(\left(h_d+\frac{b_1}{a_1}h_r\right)X_{1G}+Z_1\right)-nh(Z_1)\\
&&+nh(Y_{2G})-nh\left(\left(h_c+\frac{b_1}{a_1}h_r\right)X_{1G}+Z_2\right)+n\epsilon\\
&\leq&n\log(1+(a_1h_d+b_1h_r)^2P)+n\log\left(1+\frac{(a_2h_d+b_2h_r)^2P}{1+(a_1h_c+b_1h_r)^2P}\right)+n\epsilon.
\end{eqnarray*}
where (a) follows from Fano's inequality, and (b) follows since $X_1$ and $X_2$ are independent. Using Lemma \ref{WCN}, we can show that 
$$h(Y_1^n|X_2^n)-h(Y_2^n|X_2^n)\leq nh\left(\left(h_d+\frac{b_1}{a_1}h_r\right)X_{1G}+Z_1\right)-h\left(\left(h_c+\frac{b_1}{a_1}h_r\right)X_{1G}+Z_2\right),$$ and we obtain (c), where $X_{jG}\sim\mathcal{CN}(0,a_j^2P)$. Clearly $h(Y_2^n)\leq nh(Y_{2G})$, where $Y_{2G}$ is the received signal at the second receiver when $X_{jG}$ is transmitted. As a result
\begin{eqnarray*}
R_1+R_2&\leq&\log(1+(a_1h_d+b_1h_r)^2P)+\log\left(1+\frac{(a_2h_d+b_2h_r)^2P}{1+(a_1h_c+b_1h_r)^2P}\right)\\
&\leq&\max_{\substack{a_1,a_2,b_1,b_2}}C(q_4)+C(q_5),
\end{eqnarray*}
with $a_j^2\leq1$, $\sum_{i=1}^2b_i^2\leq1$, and $h_c+\frac{b_j}{a_j}h_r\neq0$, where
\begin{eqnarray*}
q_4&=&(a_1h_d+b_1h_r)^2P\\
q_5&=&\frac{(h_d+b_2h_r)^2P}{1+(a_1h_c+b_1h_r)^2P}.
\end{eqnarray*}
In terms of GDOF, that is
\begin{eqnarray*}
d\leq \max\left\{1-\frac{\alpha}{2},\beta\right\}.
\end{eqnarray*}

On the other hand, assume $h_c+\frac{b_2}{a_2}h_r=0$, i.e. $a_2=-b_2h_r/h_c$. In this case the received signals are given by
\begin{eqnarray*}
Y_1^n&=&\left(h_d+\frac{b_1}{a_1}h_r\right)X_1^n+Z_1^n\\
Y_2^n&=&\left(h_c+\frac{b_1}{a_1}h_r\right)X_1^n+(h_d-h_c)X_2^n+Z_2^n.
\end{eqnarray*}
The rates can be bound as follows
\begin{eqnarray*}
n(R_1+R_2)&\stackrel{(a)}{\leq}&I(X_1^n;Y_1^n)+I(X_2^n;Y_2^n)+n\epsilon\\
&=&h(Y_1^n)-h(Y_1^n|X_1^n)+h(Y_2^n)-h(Y_2^n|X_2^n)+n\epsilon\\
&\stackrel{(b)}{\leq}&nh\left(\left(h_d+\frac{b_1}{a_1}h_r\right)X_{1G}+Z_1\right)-nh(Z_1)\\
&&+nh(Y_{2G})-nh\left(\left(h_c+\frac{b_1}{a_1}h_r\right)X_{1G}+Z_2\right)+n\epsilon\\
&=&n\log(1+(a_1h_d+b_1h_r)^2P)+n\log\left(1+\frac{(h_d-h_c)^2a_2^2P}{1+(a_1h_c+b_1h_r)^2P}\right)+n\epsilon\\
&=&n\log(1+(a_1h_d+b_1h_r)^2P)+n\log\left(1+\frac{b_2^2(h_r-h_dh_r/h_c)^2P}{1+(a_1h_c+b_1h_r)^2P}\right)+n\epsilon,
\end{eqnarray*}
where (a) follows from Fano's inequality, (b) from Lemma \ref{WCN}. Thus,
\begin{eqnarray*}
R_1+R_2\leq\max_{\substack{a_1,b_1,b_2}}C(q_4)+C(q_6),
\end{eqnarray*}
where $$q_6=\frac{b_2^2(h_r-h_dh_r/h_c)^2P}{1+(a_1h_c+b_1h_r)^2P}.$$
So
\begin{eqnarray*}
d\leq\max\{1+\beta-\alpha,1\}.
\end{eqnarray*}

Combining both bounds, we get the following upper bound
\begin{eqnarray*}
R_1+R_2\leq\max\left\{\max_{\substack{a_1,a_2,b_1,b_2\\ h_c+\frac{b_j}{a_j}h_r\neq0}}C(q_4)+C(q_5), \max_{\substack{a_1,b_1,b_2}}C(q_4)+C(q_6)\right\}
\end{eqnarray*}
such that $a_j^2\leq1$, and $\sum_{i=1}^2b_i^2\leq1$, i.e.
\begin{eqnarray*}
d\leq\max\left\{1-\frac{\alpha}{2},\beta,1+\beta-\alpha,1\right\}.
\end{eqnarray*}

\subsection{The Weak Interference Bound}
\label{WIB}
We still assume $h_c<h_d$. We use a genie similar to the one used in \cite{Tse}, let a genie give the signals
\begin{eqnarray*}
S_1^n&=&\left(h_c+\frac{b_1}{a_1}h_r\right)X_1^n+Z_2^n\\
S_2^n&=&\left(h_c+\frac{b_2}{a_2}h_r\right)X_2^n+Z_1^n
\end{eqnarray*}
to receivers 1 and 2 respectively. We can bound the rates as follows:
\begin{eqnarray*}
n(R_1+R_2)&\leq&I(X_1^n;Y_1^n,S_1^n)+I(X_2^n;Y_2^n,S_2^n)\\ 
&=&I(X_1^n;S_1^n)+I(X_1^n;Y_1^n|S_1^n)+I(X_2^n;S_2^n)+I(X_2^n;Y_2^n|S_2^n)\\
&=&h(S_1^n)-h(S_1^n|X_1^n)+h(Y_1^n|S_1^n)-h(Y_1^n|S_1^n,X_1^n)\\
&&+h(S_2^n)-h(S_2^n|X_2^n)+h(Y_2^n|S_2^n)-h(Y_2^n|S_2^n,X_2^n)\\
&=&h(S_1^n)-h(Z_2^n)+h(Y_1^n|S_1^n)-h(S_2^n)+h(S_2^n)-h(Z_1^n)+h(Y_2^n|S_2^n)-h(S_1^n)\\
&=&h(Y_1^n|S_1^n)+h(Y_2^n|S_2^n)-h(Z_1^n)-h(Z_2^n)\\
\end{eqnarray*}
Now we bound the expression $h(Y_1^n|S_1^n)-h(Z_1^n)$ as follows
\begin{eqnarray*}
h(Y_1^n|S_1^n)-h(Z_1^n)&\stackrel{(a)}{\leq}&\sum_{i=1}^n\log\left(1+\left(h_c+\frac{b_2}{a_2}h_r\right)^2P_{2,i}+\frac{(h_d+\frac{b_1}{a_1}h_r)^2P_{1,i}}{1+(h_c+\frac{b_1}{a_1}h_r)^2P_{1,i}}\right)\\
&\stackrel{(b)}{\leq}&n\log\left(1+(a_2h_c+b_2h_r)^2P+\frac{(a_1h_d+b_1h_r)^2P}{1+(a_1h_c+b_1h_r)^2P}\right),
\end{eqnarray*}
where (a) follows since the circularly symmetric complex Gaussian distribution maximizes the differential entropy, and (b) follows using Jensen's inequality. Similarly, we can bound $h(Y_2^n|S_2^n)-h(Z_2^n)$. It follows that
\begin{eqnarray*}
R_1+R_2\leq\max_{\substack{a_1,a_2,b_1,b_2}}C(q_7)+C(q_8),
\end{eqnarray*}
such that $a_1^2\leq1$, $a_2^2\leq1$, $b_1^2+b_2^2\leq1$, with
\begin{eqnarray*}
q_7&=&(a_2h_c+b_2h_r)^2P+\frac{(a_1h_d+b_1h_r)^2P}{1+(a_1h_c+b_1h_r)^2P}\\
q_8&=&(a_1h_c+b_1h_r)^2P+\frac{(a_2h_d+b_2h_r)^2P}{1+(a_2h_c+b_2h_r)^2P}.
\end{eqnarray*}
As a result
\begin{eqnarray*}
d\leq\max\{1+\beta-\alpha,1\}.
\end{eqnarray*}

\end{appendix}

\end{document}